\spnewtheorem{prop}{Property}{\bfseries}{\itshape} % to make property stand out, like lemmata
\newcommand{\remove}[1]{}
\renewenvironment{proof}
{{\em Proof.\ }}{\hspace*{\fill}$\Box$\par\vspace{2mm}}
\renewcommand{\epsilon}{\varepsilon}
\begin{document}
\title{A Lower Bound on the Diameter of the Flip Graph
%\thanks{Research partially supported by the Australian Research Council (grant DE140100708).}
}
\author{Fabrizio Frati\inst{1}}
\institute{Dipartimento di Ingegneria, University Roma Tre, Italy\\
\email{frati@dia.uniroma3.it}}
\maketitle

\begin{abstract}
The flip graph is the graph whose nodes correspond to non-isomorphic combinatorial triangulations and whose edges connect pairs of triangulations that can be obtained one from the other by flipping a single edge. In this note we show that the diameter of the flip graph is at least $\frac{7n}{3} + \Theta(1)$, improving upon the previous $2n + \Theta(1)$ lower bound.
\end{abstract}

\section{Introduction} \label{se:introduction}

A {\em combinatorial triangulation} is a maximal planar graph (a planar graph to which no edge can be added without destroying planarity) together with a clockwise ordering for the edges incident to each vertex. An intuitive way to define a combinatorial triangulation is as an equivalence class of planar drawings (say on the sphere) of a maximal planar graph, where two drawings are equivalent if a continuous morph exists from one drawing to the other that does not create crossings or overlaps between edges. We are interested in {\em simple} combinatorial triangulations, which have no self-loops or multiple edges. In the following, when we say {\em triangulation} we always mean {\em simple combinatorial triangulation}. Observe that, in a planar drawing equivalent to a triangulation, all the faces are delimited by cycles with three vertices (hence the name {\em triangulation}).

Consider a planar drawing $\Gamma$ on the sphere equivalent to a triangulation $G$ and consider an edge $(a,b)$ in $G$. If $(a,b)$ were removed from $\Gamma$, there would exist a unique region of the sphere delimited by a cycle with four edges; in fact the cycle delimiting such region would be $(a,a',b,b')$, for some vertices $a'$ and $b'$. The operation of {\em flipping} $(a,b)$ consists of removing $(a,b)$ from $G$ and inserting the edge $(a',b')$ inside the region delimited by the cycle $(a,a',b,b')$. The resulting triangulation $G'$ might not be simple though. In the following, we only refer to flips that maintain the triangulations simple.

The {\em flip graph} ${\cal G}_n$ describes the possibility of transforming $n$-vertex triangulations using flips. The vertex set of ${\cal G}_n$ is the set of distinct $n$-vertex triangulations; two triangulations $G$ and $H$ are connected by an edge in ${\cal G}_n$ if there exists an edge $e$ of $G$ such that flipping $e$ in $G$ results in $H$.

Various properties of the flip graph have been studied. A particular attention has been devoted to the {\em diameter} of ${\cal G}_n$, which is the length of the longest (among all pairs of vertices) shortest path; refer to the surveys~\cite{bh-fpg-09,bv-hfct-11}. A first proof that the diameter of ${\cal G}_n$ is finite goes back to almost a century ago~\cite{w-jdmv-36}. A sequence of deep improvements~\cite{bjrsv-mt4uf-14,chktw-adfdht-15,k-dfts-97,mno-dfhts-03,nn-dtgcs} have led to the current best upper bound of $5n+\Theta(1)$, which was proved this year by Cardinal et al.~\cite{chktw-adfdht-15}. Significantly less results and techniques have been presented for the lower bound. We are only aware of a $2n+\Theta(1)$ lower bound on the diameter of ${\cal G}_n$, which was proved by Komuro~\cite{k-dfts-97} by exploiting the existence of triangulations with ``very different'' vertex degrees. The main contribution of this note is the following theorem.

\begin{theorem} \label{th:lower-bound}
For every $n\geq 3$, the diameter of the flip graph is at least $\frac{7n}{3}-34$.
\end{theorem}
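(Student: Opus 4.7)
The plan is to sharpen the edge-removal lower bound that underlies Komuro's proof. The basic inequality is that each flip removes exactly one edge, so any $k$-flip sequence transforming an $n$-vertex triangulation $G$ into a triangulation $T$ satisfies $k \geq |E(G) \setminus E(T)| = (3n-6) - |E(G) \cap E(T)|$. Since flip distance in $\mathcal{G}_n$ is taken between isomorphism classes, the flip distance between $[G_n]$ and $[H_n]$ is at least $(3n-6) - \max_{\sigma} |E(G_n) \cap E(\sigma(H_n))|$, where $\sigma$ ranges over all vertex bijections. Thus it suffices to exhibit two $n$-vertex triangulations $G_n$, $H_n$ for which the maximum overlap is at most $\frac{2n}{3} + O(1)$; Komuro's $2n+\Theta(1)$ bound corresponds to an overlap bound of about $n$, so an additional saving of $\tfrac{n}{3}$ in the overlap is the target.

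For the construction I would take $G_n$ and $H_n$ with very different local structure. A natural candidate is to let $G_n$ be a stacked triangulation that maximises the number of degree-$3$ vertices (one can arrange for about $\frac{2n}{3}$ such vertices, each forming a separating triangle on its three pairwise-adjacent neighbours), and let $H_n$ be a triangulation of minimum degree at least $5$---for instance, one whose degree sequence lies in $\{5,6\}$, which is realisable for every sufficiently large $n$. This ensures that any bijection $\sigma$ maps the many degree-$3$ vertices of $G_n$ to vertices of $H_n$ of degree strictly larger than~$3$, producing a structural mismatch at a linear number of vertices.

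The crux is uniformly bounding $|E(G_n) \cap E(\sigma(H_n))|$. A purely degree-based analysis---bounding each vertex's contribution by $\min(\deg_{G_n}(v), \deg_{H_n}(\sigma(v)))$ and dividing by two---gives only $|E(G_n) \cap E(\sigma(H_n))| \leq \frac{1}{2}\sum_v \min(\deg_{G_n}(v), \deg_{H_n}(\sigma(v)))$, which is about $\frac{3n}{2}$ for the construction above, and therefore too weak. The improvement must come from a rigidity argument: each degree-$3$ vertex of $G_n$ together with its three pairwise-adjacent neighbours induces a $K_4$ subgraph, and $H_n$ simply does not contain enough suitably placed $K_4$'s to accommodate every such neighbourhood simultaneously. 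A charging or pigeonhole argument on these $K_4$-configurations should then cut the overlap down to the required $\frac{2n}{3} + O(1)$.

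The main obstacle, and where I expect the bulk of the technical work to go, is making this structural counting uniform over all bijections $\sigma$---ruling out, in particular, clever bijections that concentrate many of $G_n$'s degree-$3$ neighbourhoods inside a small high-local-density region of $H_n$. Additional bookkeeping is needed for vertices near the ``boundary'' of the construction (for instance the three initial vertices of the stacking) and for the constant-sized contributions at the few high-degree hub vertices of $G_n$; these localised losses are consistent with the additive $-34$ in the statement. The exact constant $\frac{7}{3}$ should drop out from balancing the linear term of the overlap bound against the $\approx \frac{2n}{3}$ degree-$3$ vertices of $G_n$.
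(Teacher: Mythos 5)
Your first paragraph is exactly the paper's Lemma on the flip distance being at least $3n-6-\max_\gamma c_\gamma$ (each flip changes the common-edge count by at most one, maximized over all vertex bijections), so the framework is fine. The gap is that the heart of the argument --- the claim that for your pair $(G_n,H_n)$ every bijection $\sigma$ satisfies $|E(G_n)\cap E(\sigma(H_n))|\le \frac{2n}{3}+O(1)$ --- is never proved, only hoped for, and the $K_4$-rigidity argument you sketch cannot deliver it. Even granting that a $\{5,6\}$-regular triangulation contains no $K_4$ at all, the conclusion is only that for each of the roughly $\frac{2n}{3}$ degree-$3$ vertices of $G_n$, at least one of the six edges of its $K_4$-neighborhood is lost; since each blue-blue edge lies in two such $K_4$'s, this forces a loss of only about $\frac{n}{3}$ edges, i.e.\ an overlap bound of about $\frac{8n}{3}$ and a flip lower bound of about $\frac{n}{3}$ --- weaker even than Komuro's bound. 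Worse, for natural instantiations of your pair the target bound is simply false: if the base of the stacked triangulation $G_n$ is chosen locally lattice-like (maximum degree six), then a $\{5,6\}$-degree triangulation $H_n$ is locally a triangular lattice, and $3$-coloring that lattice with classes $A,B,C$ exhibits it as ``triangulation on $A$ plus one vertex of $B\cup C$ inside each face joined to its three corners, plus extra $B$--$C$ edges.'' Mapping blue vertices to $A$ and degree-$3$ vertices to $B\cup C$ preserves essentially all $2n+O(1)$ edges incident to degree-$3$ vertices, so $\max_\sigma |E(G_n)\cap E(\sigma(H_n))|$ can be as large as $2n+O(1)$. So a degree-$3$-versus-degree-$5$ mismatch alone cannot limit the overlap to $\frac{2n}{3}$; the problem of ruling out bijections that exploit local structure, which you flag as ``the bulk of the technical work,'' is not a technicality but the place where this construction breaks.

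The paper avoids this entirely by a different choice of the second triangulation: $G_2$ is the double wheel (a path $P$ on $n-2$ vertices plus two apices adjacent to everything), while $G_1$ is a maximum-degree-six triangulation on $\lfloor n/3\rfloor+2$ vertices with a degree-$3$ vertex stacked in (almost) every face, so that $G_1$ has maximum degree twelve and its stacked vertices form an independent set of size about $\frac{2n}{3}$. Then the overlap bound is immediate for every bijection: the two apices contribute at most $12+12$ common edges because of the degree bound, and an edge of $P$ can be common only if an endpoint is mapped from one of the $\lfloor n/3\rfloor+2$ non-stacked vertices (since $G_1$ has no edge between stacked vertices), giving at most $2\lfloor n/3\rfloor+4$ more, hence $c_\gamma\le 2\lfloor n/3\rfloor+28$ and the $\frac{7n}{3}-34$ bound. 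If you want to salvage your write-up, replacing $H_n$ by this double wheel and adding the independent-set and bounded-degree properties of $G_n$ is the missing idea.
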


\section{Proof of the Main Result} \label{se:lower-bound}

In this section we prove Theorem~\ref{th:lower-bound}. Let $n\geq 3$. For a triangulation $G$, we denote by $V(G)$ and $E(G)$ its vertex and edge set, respectively.

Consider any $n$-vertex triangulation $G_1$. A path incident to $G_1$ in ${\cal G}_n$ is a sequence of $n$-vertex triangulations such that the first triangulation in the sequence is $G_1$ and any two triangulations which are consecutive in the sequence can be obtained one from the other by flipping a single edge. Thus, a path incident to $G_1$ in ${\cal G}_n$ corresponds to a {\em valid sequence} $\sigma=(u_1,v_1),\dots,(u_{k},v_{k})$ of flips, where $u_1,\dots,u_{k},v_1,\dots,v_{k}$ are vertices in $V(G_1)$ and $(u_i,v_i)$ is an edge of the triangulation obtained starting from $G_1$ by performing flips $(u_1,v_1),\dots,(u_{i-1},v_{i-1})$ in this order. For a valid sequence $\sigma$ of flips, denote by $G_1^{\sigma}$ the $n$-vertex triangulation obtained starting from $G_1$ by performing the flips in $\sigma$. Observe that $V(G_1)=V(G_1^{\sigma})$, given that a flip only modifies the edge set of a triangulation, and not its vertex set.

Now consider any two $n$-vertex triangulations $G_1$ and $G_2$ and consider a simple path in ${\cal G}_n$ between them. This path corresponds to a valid sequence $\sigma$ of flips transforming $G_1$ into $G_2$. By the definition of ${\cal G}_n$, the $n$-vertex triangulations $G_1^{\sigma}$ and $G_2$ are isomorphic; that is, there exists a bijective mapping $\gamma:V(G_1^{\sigma}) \rightarrow V(G_2)$ such that $(u,v)\in E(G_1^{\sigma})$ if and only if $(\gamma(u),\gamma(v))\in E(G_2)$.

The key idea for the proof of Theorem~\ref{th:lower-bound} is to consider the bijective mapping $\gamma$ {\em before} the flips in $\sigma$ are applied to $G_1$ and to derive a lower bound on the number of flips in $\sigma$ based on properties of $\gamma$. In fact, the property we employ is the number of common edges of $G_1$ and $G_2$ according to~$\gamma$.

More precisely, for a bijective mapping $\gamma:V(G_1) \rightarrow V(G_2)$ between the vertex sets of two triangulations $G_1$ and $G_2$, we define the {\em number $c_{\gamma}$  of common edges with respect to $\gamma$} as the number of distinct edges $(u,v)\in E(G_1)$ such that $(\gamma(u),\gamma(v))\in E(G_2)$. We have the following.

\begin{lemma} \label{le:lower-bound}
For any two $n$-vertex triangulations $G_1$ and $G_2$, the number of flips needed to transform $G_1$ into $G_2$ is at least $3n-6-\max_{\gamma}{c_{\gamma}}$, where the maximum is over all bijective mappings $\gamma:V(G_1) \rightarrow V(G_2)$.
\end{lemma}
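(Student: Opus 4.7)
The plan is to fix any valid sequence $\sigma$ of $k$ flips transforming $G_1$ into $G_2$, associate to it the natural bijection $\gamma:V(G_1)\to V(G_2)$ arising from the isomorphism between $G_1^{\sigma}$ and $G_2$, and show that $c_{\gamma}\geq 3n-6-k$. Rearranging and using $c_{\gamma}\leq \max_{\gamma'}c_{\gamma'}$ will then immediately give $k\geq 3n-6-\max_{\gamma'} c_{\gamma'}$, as required.

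The first step is to extract $\gamma$. Since $\sigma$ corresponds to a path in $\mathcal{G}_n$ from $G_1$ to $G_2$, the triangulation $G_1^{\sigma}$ is isomorphic to $G_2$, so there exists an isomorphism $\gamma:V(G_1^{\sigma})\to V(G_2)$; because $V(G_1^{\sigma}) = V(G_1)$, I can regard $\gamma$ as a bijection between $V(G_1)$ and $V(G_2)$. The second step is to bound the number of edges of $G_1$ that survive all $k$ flips. Each flip removes exactly one edge and adds exactly one edge, so consecutive triangulations along $\sigma$ have edge sets whose symmetric difference has size exactly $2$. By the triangle inequality applied $k$ times, $|E(G_1) \triangle E(G_1^{\sigma})| \leq 2k$, and therefore $|E(G_1) \cap E(G_1^{\sigma})| \geq 3n-6-k$. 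The third step is to observe that every edge $(u,v)$ in this intersection satisfies $(u,v) \in E(G_1)$ and, because $\gamma$ is an isomorphism and $(u,v)\in E(G_1^{\sigma})$, also $(\gamma(u), \gamma(v)) \in E(G_2)$; hence each such edge contributes a distinct term to $c_{\gamma}$, yielding $c_{\gamma} \geq 3n-6-k$.

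There is no real technical obstacle here. The one conceptual point — already emphasized in the paragraph preceding the lemma — is to compare $G_1$ and $G_2$ through $\gamma$ \emph{before} the flips are applied, rather than attempting to track individual edges through the entire sequence (which would be delicate, since an edge flipped away can be recreated by a later flip, and the same edge may even be flipped multiple times). Once this viewpoint is adopted, the argument collapses to the elementary symmetric-difference estimate above.
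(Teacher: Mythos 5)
Your proof is correct and follows essentially the same route as the paper: you fix the isomorphism $\gamma$ between $G_1^{\sigma}$ and $G_2$, pull it back to $V(G_1)$, and use the fact that each flip exchanges exactly one edge, which is precisely the paper's observation that a single flip changes the number of common edges with respect to the fixed $\gamma$ by at most one (you phrase it via the symmetric difference and the edges of $G_1$ surviving in $G_1^{\sigma}$, but the counting is identical). No gap; your intermediate step $|E(G_1)\cap E(G_1^{\sigma})|\geq 3n-6-k$ indeed follows since both edge sets have size $3n-6$.
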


\begin{proof}
The statement descends from the following two observations. First, two isomorphic $n$-vertex  triangulations have $3n-6$ common edges according to the bijective mapping $\gamma$ realizing the isomorphism. Second, for any two $n$-vertex triangulations $H$ and $L$ that have $\ell$ common edges with respect to any bijective mapping $\gamma$, flipping any edge in $H$ results in a combinatorial triangulation $H'$ such that $H'$ and $L$  have at most $\ell+1$ common edges with respect to $\gamma$.
\end{proof}

It remains to define two $n$-vertex  triangulations $G_1$ and $G_2$ such that {\em any} bijective mapping $\gamma$ between their vertex sets has a small number $c_{\gamma}$ of common edges.

\begin{itemize}
\item Triangulation $G_1$ is defined as follows (see Fig.~\ref{fig:g1}). Let $H$ be any triangulation of maximum degree six with $\lfloor \frac{n}{3} \rfloor +2$ vertices. Note that the number of faces of $H$ is $2(\lfloor \frac{n}{3} \rfloor +2)-4=2\lfloor \frac{n}{3} \rfloor$. If $n\equiv 2$ modulo $3$, if $n\equiv 1$ modulo $3$, or if $n\equiv 0$ modulo $3$, then insert a vertex inside each face of $H$, insert a vertex inside each face of $H$ except for one face, or insert a vertex inside each face of $H$ except for two faces, respectively. When a vertex is inserted inside a face of $H$, it is connected to the three vertices of $H$ incident to the face. Denote by $G_1$ the resulting $n$-vertex triangulation. We say that the vertices of $G_1$ in $H$ are {\em blue}, while the other vertices of $G_1$ are {\em red}.
\item Triangulation $G_2$ is defined as follows (see Fig.~\ref{fig:g2}). Starting from a path $P$ with $n-2$ vertices, connect all the vertices of $P$ to two further vertices $a$ and $b$, and connect $a$ with $b$.
\end{itemize}

\begin{figure}[b]
  \centering
  \subfloat[]{\label{fig:g1}\includegraphics[width=0.4\textwidth]{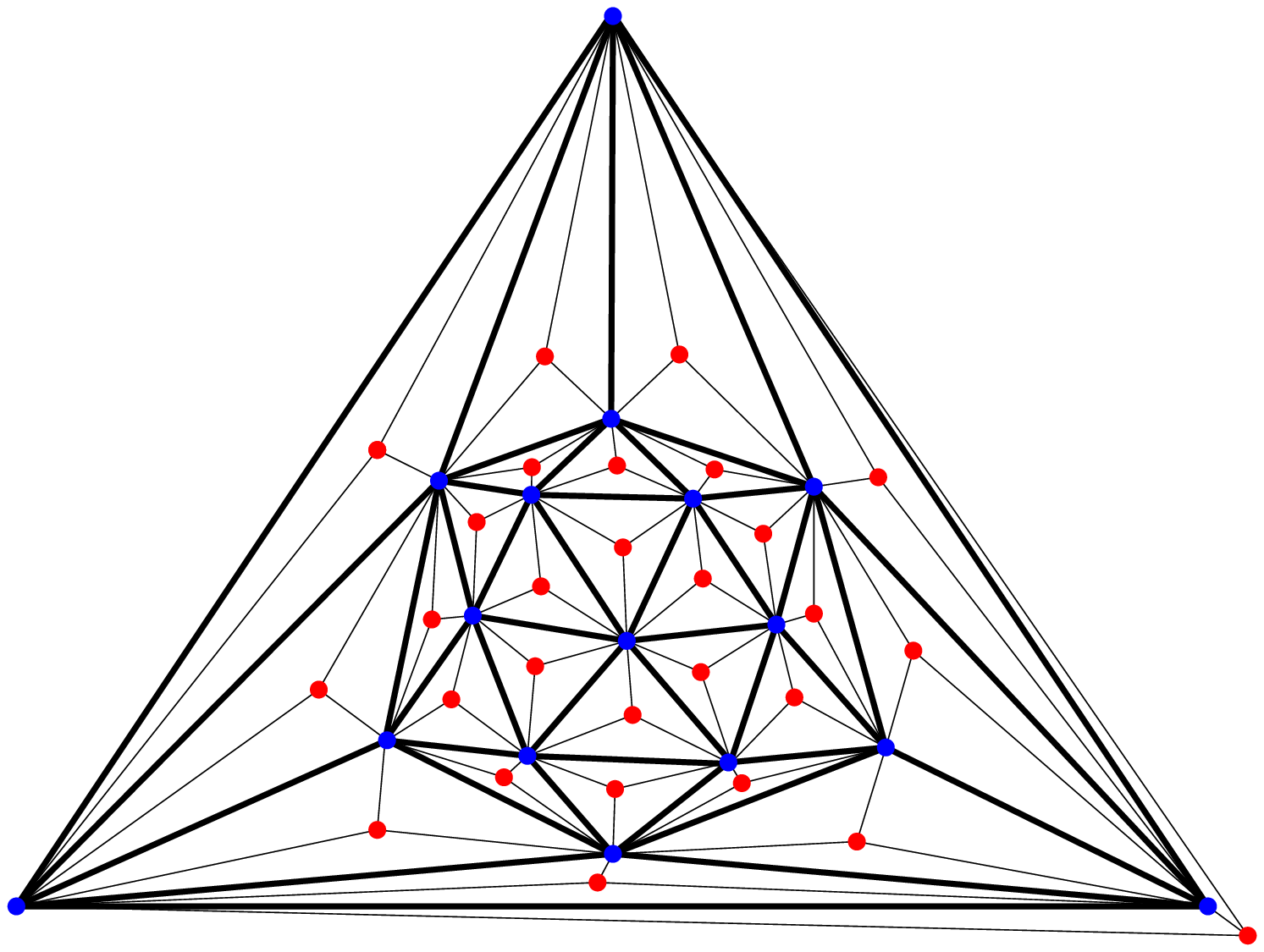}}\hfil%
  \subfloat[]{\label{fig:g2}\includegraphics[width=0.4\textwidth]{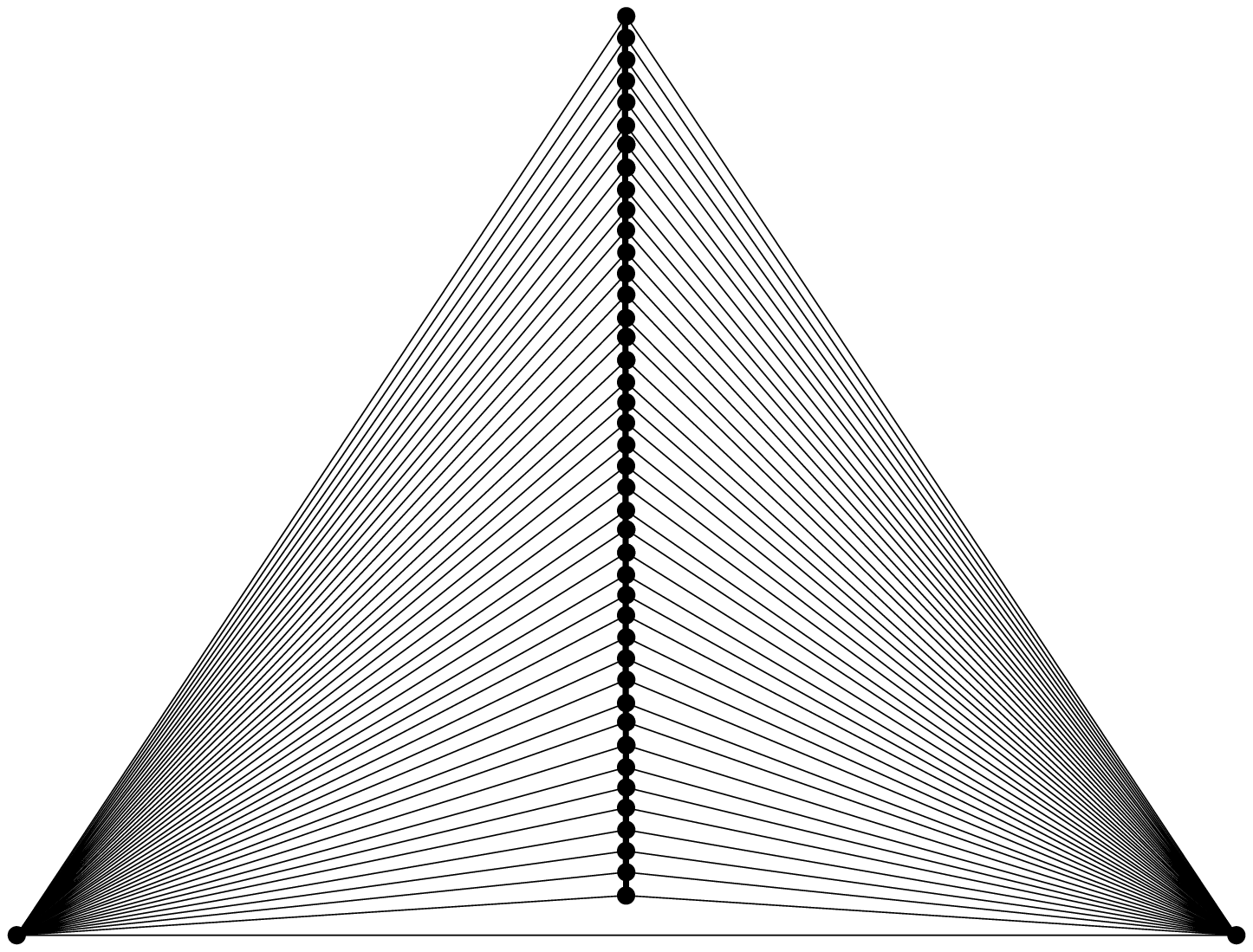}}\hfil%
  \caption{Triangulations $G_1$ (a) and $G_2$ (b).}
  \label{fig:graphs}
\end{figure}

We have the following.

\begin{lemma} \label{le:g1vsg2}
For any bijective mapping $\gamma:V(G_1) \rightarrow V(G_2)$, we have $c_{\gamma}\leq 2\lfloor \frac{n}{3} \rfloor+28$.
\end{lemma}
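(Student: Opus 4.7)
The plan is to partition the $3n-6$ edges of $G_2$ into (i) the $2n-3$ edges incident to $\{a,b\}$ and (ii) the $n-3$ edges of the path $P$, and to bound the contribution of each class to $c_\gamma$ separately. Fix a bijection $\gamma$ and set $u_1=\gamma^{-1}(a)$ and $u_2=\gamma^{-1}(b)$. Letting $p_1,\dots,p_{n-2}$ denote the vertices of $P$ in the order they appear along $P$, define $w_i=\gamma^{-1}(p_i)$, so that $w_1,\dots,w_{n-2}$ is an enumeration of $V(G_1)\setminus\{u_1,u_2\}$.

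A class (i) common edge corresponds to an edge of $G_1$ incident to $\{u_1,u_2\}$, so the total class (i) contribution is at most $\deg_{G_1}(u_1)+\deg_{G_1}(u_2)$. The first step is to show that the maximum degree of $G_1$ is $12$: by construction every red vertex has degree exactly $3$, while a blue vertex $v$ is incident in $G_1$ to its $d_H(v)\le 6$ neighbours in $H$ plus at most one red vertex for each of the $d_H(v)$ faces of $H$ around $v$, hence has degree at most $2d_H(v)\le 12$. This yields at most $24$ class (i) common edges.

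A class (ii) common edge corresponds to an index $i$ with $(w_i,w_{i+1})\in E(G_1)$. The key observation is that $G_1$ contains no edge between two red vertices, since each red vertex is adjacent only to the three blue vertices bounding its enclosing face of $H$. Hence every class (ii) common edge has at least one blue endpoint. Since each blue vertex appears at most once among $w_1,\dots,w_{n-2}$, it is incident to at most two consecutive pairs $(w_i,w_{i+1})$; as $G_1$ contains only $\lfloor n/3\rfloor+2$ blue vertices in total, class (ii) contributes at most $2\lfloor n/3\rfloor+4$ common edges.

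Summing the two bounds gives $c_\gamma\le 24+2\lfloor n/3\rfloor+4=2\lfloor n/3\rfloor+28$, matching the claim. The only step requiring any care is the degree bound for blue vertices, which in turn rests on the trivial fact that each face of $H$ receives at most one red vertex in the construction of $G_1$. Everything else is a counting argument that plays the rigid structure of $G_2$ (almost all of its edges touch $\{a,b\}$) against the rigid structure of $G_1$ (its red vertices form an independent set of size roughly $2n/3$, each with only three blue neighbours).
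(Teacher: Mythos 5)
Your proposal is correct and follows essentially the same route as the paper's proof: bound the common edges incident to $\gamma^{-1}(a)$ and $\gamma^{-1}(b)$ via the maximum degree $12$ of $G_1$ (contributing at most $24$), and bound the common path edges by $2(\lfloor n/3\rfloor+2)$ using the fact that the red vertices form an independent set. Your write-up is slightly more explicit about the degree bound (treating red vertices separately and counting one red neighbour per incident face of a blue vertex), but the decomposition and counting are identical to the paper's.
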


\begin{proof}
Consider any bijective mapping $\gamma:V(G_1) \rightarrow V(G_2)$. First, note that each vertex $v\in V(G_1)$ has degree at most twelve. Namely, $v$ has at most six blue neighbors; further, $v$ has at most six incident faces in $H$, hence it has at most six red neighbors. It follows that, whichever vertex in $V(G_1)$ is mapped to $a$ according to $\gamma$, at most twelve out of the $n-1$ edges incident to $a$ are shared by $G_1$ and $G_2$ with respect to $\gamma$. Analogously, at most twelve out of the $n-1$ edges incident to $b$ are shared by $G_1$ and $G_2$ with respect to $\gamma$. It remains to bound the number of edges of $P$ that are shared by $G_1$ and $G_2$ with respect to $\gamma$. This proof uses a pretty standard technique (see, e.g.,~\cite{b-spcp-61,chktw-adfdht-15}). Since $G_1$ has no edge connecting two red vertices, the number of edges of $P$ that are shared by $G_1$ and $G_2$ with respect to $\gamma$ is at most the number of edges of $P$ that have at least one of their end-vertices mapped to a blue vertex; since $\lfloor \frac{n}{3} \rfloor +2$ vertices of $G_1$ are blue, there are at most $2\lfloor \frac{n}{3} \rfloor +4$ such edges of $P$. It follows that the number of edges shared by $G_1$ and $G_2$ with respect to $\gamma$ is at most $2\lfloor \frac{n}{3} \rfloor+28$.
\end{proof}

By Lemma~\ref{le:g1vsg2}, we have that $G_1$ and $G_2$ are two $n$-vertex triangulations such that, for any bijective mapping $\gamma:V(G_1) \rightarrow V(G_2)$, we have $c_{\gamma}\leq  2\lfloor \frac{n}{3} \rfloor+28$. By Lemma~\ref{le:lower-bound}, the number of flips needed to transform $G_1$ into $G_2$ is at least $3n-6-2\lfloor \frac{n}{3}\rfloor-28\geq \frac{7n}{3}-34$. This concludes the proof of Theorem~\ref{th:lower-bound}.

\section{Conclusions} \label{se:conclusions}

In this note we have presented a lower bound of $\frac{7n}{3} +\Theta(1)$ on the diameter of the flip graph for $n$-vertex triangulations. One of the main ingredients for this lower bound is a lemma stating that there exist two $n$-vertex triangulations such that any bijective mapping $\gamma$ between their vertex sets creates at most $c_{\gamma}\leq \frac{2n}{3} +\Theta(1)$ common edges.

It not clear to us whether the bound resulting from this approach can be improved further. That is, is it true that, for every two $n$-vertex triangulations, there exists a bijective mapping $\gamma$ between their vertex sets creating $c_{\gamma}\geq \frac{2n}{3} +\Theta(1)$ common edges? The only lower bound on the value of $c_{\gamma}$ we are aware of comes as a corollary of the fact that every $n$-vertex triangulation has a matching of size at least $\frac{n+4}{3}$ as proved in~\cite{bddfk-tbmmm-04}, hence $c_{\gamma}\geq \frac{n+4}{3}$.

It is an interesting fact that, for every $n$-vertex triangulation $H$, a bijective mapping $\gamma: V(H)\rightarrow V(G_2)$ exists creating $c_{\gamma}= \frac{2n}{3} +\Theta(1)$ common edges, where $G_2$ is the graph from the proof of Theorem~\ref{th:lower-bound}. In fact, every $n$-vertex triangulation $H$ has a set of $\frac{n}{3} +\Theta(1)$ vertex-disjoint simple paths covering its vertex set $V(H)$, as proved by Barnette~\cite{b-tpg-65} (this bound is the smallest possible~\cite{b-spcp-61}). Mapping these paths to sub-paths of the path $P$ in $G_2$ provides the desired bijective mapping $\gamma$.

\subsubsection*{Acknowledgments} Thanks to Michael Hoffmann for an inspiring seminar he gave when visiting the author.

\bibliographystyle{abbrv}
\bibliography{bibliography}

\end{document}